\newcommand{\tinyspace}{\mspace{1mu}}
\newcommand{\abs}[1]{\left\lvert\tinyspace #1 \tinyspace\right\rvert}
\newcommand{\norm}[1]{\left\lVert\tinyspace #1 \tinyspace\right\rVert}
\newcommand{\setft}[1]{\mathrm{#1}}
\newcommand{\density}[1]{\setft{D}\left(#1\right)}
\newcommand{\supp}{{\operatorname{supp}}}
\def\I{\mathbb{1}}
\newenvironment{mylist}[1]{\begin{list}{}{
    \setlength{\leftmargin}{#1}
    \setlength{\rightmargin}{0mm}
    \setlength{\labelsep}{2mm}
    \setlength{\labelwidth}{8mm}
    \setlength{\itemsep}{0mm}}}
    {\end{list}}
\newcommand{\Pa}[1]{\left(#1\right)}
\newcommand{\Br}[1]{\left[#1\right]}
\newcommand{\set}[1]{\{#1\}}
\newcommand{\Set}[1]{\left\{#1\right\}}
\DeclareMathOperator{\trace}{Tr}
\newcommand{\Ptr}[2]{\trace_{#1}\Pa{#2}}
\newcommand{\Tr}[1]{\Ptr{}{#1}}
\def\cH{\mathcal{H}}\def\cI{\mathcal{I}}
\def\cM{\mathcal{M}}
\def\rF{\mathrm{F}}
\def\rS{\mathrm{S}}
\newtheorem{thrm}{Theorem}[section]
\newtheorem{lem}[thrm]{Lemma}
\newtheorem{prop}[thrm]{Proposition}
\theoremstyle{definition}
\numberwithin{equation}{section}
\newcounter{questionnumber}
\begin{document}

\title{A lower bound on the fidelity
between two states in terms of their trace-distance and max-relative
entropy}

\author{Lin Zhang\footnote{E-mail: godyalin@163.com; linyz@zju.edu.cn}\\
  {\it\small Institute of Mathematics, Hangzhou Dianzi University, Hangzhou 310018, PR~China}\\
  Kaifeng Bu, Junde Wu\\
  {\it\small Department of Mathematics, Zhejiang University, Hangzhou 310027, PR~China}}
\date{}
\maketitle
\maketitle \mbox{}\hrule\mbox\\
\begin{abstract}

Fidelity is a fundamental and ubiquitous concept in quantum
information theory. Fuchs-van de Graaf's inequalities deal with
bounding fidelity from above and below. In this paper, we give a
lower bound on the quantum fidelity between two states in terms of
their trace-distance and their max-relative entropy.
\end{abstract}
\maketitle \mbox{}\hrule\mbox

\section{Introduction}

The \emph{fidelity} between two quantum states, represented by
density operators $\rho$ and $\sigma$, is defined as
\begin{eqnarray*}
\rF(\rho, \sigma ) := \Tr{\sqrt{\sqrt{\rho}\sigma\sqrt{\rho}}}.
\end{eqnarray*}
Note that both density operators here are taken from
$\density{\cH_d}$, the set of all positive semi-definite operator
with unit trace on a $d$-dimensional Hilbert space $\cH_d$. The
squared fidelity above has been called \emph{transition probability}
by Uhlmann \cite{Uhlmann76,Uhlmann2011}, operationally it is the
maximal success probability of changing a state to another one by a
measurement in a larger quantum system. The fidelity is also
employed in a number of problems such as quantifying entanglement
\cite{Vedral}, and quantum error correction \cite{Kosut}, etc.

For quantum fidelity, the well-known Fuchs-van de Graaf's inequality
states that: For arbitrary two density operators $\rho$ and $\sigma$
in $\density{\cH_d}$,
\begin{eqnarray}\label{eq:Graaf}
1-\frac12\norm{\rho-\sigma}_1 \leqslant \rF(\rho, \sigma) \leqslant
\sqrt{1-\frac14\norm{\rho-\sigma}^2_1},
\end{eqnarray}
which established a close relationship between the trace-norm of the
difference for two density operators and their fidelity
\cite{Watrous08}, where $\norm{\rho-\sigma}_1:=
\Tr{\sqrt{(\rho-\sigma)^2}}$.

The Fuchs-van de Graaf's inequality can not be improved because it
is tight. For any value of $\norm{\rho-\sigma}_1$ there exists a
pair of states saturating the inequality. However, by supplying
additional information about the pair it is possible to obtain a
tighter lower bound on the fidelity. In this paper we consider
supplying the max-relative entropy between the states as additional
information. The max-relative entropy is defined as
$$
\rS_{\max}(\rho||\sigma) :=\min\set{\gamma: \rho\leqslant
e^\gamma\sigma}.
$$
Clearly by the definition, $\rS_{\max}(\rho||\sigma)<+\infty$ if and
only if the support of $\rho$ is contained in that of $\sigma$, and
$e^{\rS_{\max}(\rho||\sigma)} =
\lambda_{\max}(\sigma^{-1/2}\rho\sigma^{-1/2})$, where
$\lambda_{\max}(X)$ means the maximal eigenvalue of the operator
$X$.

Throughout this paper, we denote the classical fidelity between two
probability distributions $p=\set{p_j}^n_{j=1}$ and
$q=\set{q_j}^n_{j=1}$ is $\rF(p,q):=\sum^n_{j=1}\sqrt{p_jq_j}$.
Also, the $\ell_1$-norm between $p$ and $q$ is defined by
$\norm{p-q}_1:=\sum^n_{j=1}\abs{p_j-q_j}$. We also use the notion of
positive operator-valued measurement (POVM), which is defined as
follows: The so-called POVM is a resolution of identity operator,
i.e. a collection $\mathbb{M}=\set{M_j}^N_{j=1}$ of nonnegative
operators that sum up to the identity operator, $\sum^N_{j=1} M_j =
\I$. We denote by $\cM$ all POVM on a quantum system. For a quantum
system prepared in a fixed state $\rho$, each POVM
$\mathbb{M}\in\cM$ performed on this system in the state $\rho$
induces a probability distribution $p^\mathbb{M}_\rho
=\Set{p^\mathbb{M}_\rho(j)}^N_{j=1}$, where $p^\mathbb{M}_\rho(j):=
\Tr{M_j\rho}$ is the probability of obtaining measurement outcome
$j$ and identified by Born's rule when a single measurement $M_j$ is
performed.

\section{Main result}

Our main result is the following:

\begin{thrm}\label{th:Fuchs-Zhang}
It holds that
\begin{eqnarray}\label{eq:lower-bound-1}
\rF(\rho,\sigma)\geqslant
1-\frac12\frac{e^{\frac12\rS_{\max}(\rho||\sigma)}}{1+e^{\frac12\rS_{\max}(\rho||\sigma)}}\norm{\rho-\sigma}_1
\end{eqnarray}
for $\rho,\sigma\in\density{\cH_d}$.
\end{thrm}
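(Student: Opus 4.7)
The plan is to reduce the quantum inequality to a purely classical one about probability distributions, by pushing $\rho$ and $\sigma$ through a suitably chosen POVM. By the Fuchs--Caves theorem there exists a POVM $\mathbb{M}^\star=\set{M^\star_j}\in\cM$ attaining the quantum fidelity, so that $\rF(\rho,\sigma)=\rF\bigl(p^{\mathbb{M}^\star}_\rho,\,p^{\mathbb{M}^\star}_\sigma\bigr)$. Two further facts make the reduction work. First, monotonicity of the trace norm under measurement gives $\norm{p^{\mathbb{M}^\star}_\rho-p^{\mathbb{M}^\star}_\sigma}_1\leq\norm{\rho-\sigma}_1$. Second, the defining operator inequality $\rho\leq e^{\rS_{\max}(\rho||\sigma)}\sigma$ passes to the outcome distributions: testing it against $M^\star_j\geq 0$ yields $p^{\mathbb{M}^\star}_\rho(j)\leq e^{\rS_{\max}(\rho||\sigma)}\,p^{\mathbb{M}^\star}_\sigma(j)$ for every $j$.

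Thus the theorem follows once one establishes the classical claim: if $p$ and $q$ are probability distributions with $p_j\leq\lambda q_j$ for every $j$, then $\rF(p,q)\geq 1-\frac{\sqrt{\lambda}}{1+\sqrt{\lambda}}\cdot\frac{\norm{p-q}_1}{2}$. For this, the key identity is
\[
(\sqrt{p_j}-\sqrt{q_j})^2=\frac{\bigl|\sqrt{p_j}-\sqrt{q_j}\bigr|}{\sqrt{p_j}+\sqrt{q_j}}\,|p_j-q_j|,
\]
which, via $2(1-\rF(p,q))=\sum_j(\sqrt{p_j}-\sqrt{q_j})^2$, invites splitting the index set into $S_+=\set{j:p_j\geq q_j}$ and $S_-=\set{j:p_j<q_j}$. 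On $S_+$ the hypothesis $\sqrt{p_j}\leq\sqrt{\lambda}\sqrt{q_j}$ upgrades the first factor to the sharp bound $\frac{\sqrt{\lambda}-1}{\sqrt{\lambda}+1}$, while on $S_-$ only the trivial bound $1$ is available. The balance $\sum_{S_+}(p_j-q_j)=\sum_{S_-}(q_j-p_j)=\tfrac12\norm{p-q}_1$, forced by normalization of $p$ and $q$, then combines the two contributions into the prefactor $\tfrac12\bigl(\tfrac{\sqrt{\lambda}-1}{\sqrt{\lambda}+1}+1\bigr)=\tfrac{\sqrt{\lambda}}{1+\sqrt{\lambda}}$, as claimed.

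The main obstacle is precisely the asymmetry of the hypothesis: on $S_-$ the ratio $|\sqrt{p_j}-\sqrt{q_j}|/(\sqrt{p_j}+\sqrt{q_j})$ can be arbitrarily close to $1$ (since $q_j/p_j$ is unbounded under the constraint $p_j\leq\lambda q_j$), so a purely pointwise estimate is too weak. The argument must therefore exploit the global normalization $\sum_j p_j=\sum_j q_j=1$, which forces $S_+$ and $S_-$ to carry equal $\ell_1$-mass, so that the weak $S_-$ bound is exactly compensated by the strong $S_+$ bound. After the classical claim is in hand, chaining it with $\norm{p^{\mathbb{M}^\star}_\rho-p^{\mathbb{M}^\star}_\sigma}_1\leq\norm{\rho-\sigma}_1$ and Fuchs--Caves produces \eqref{eq:lower-bound-1}.
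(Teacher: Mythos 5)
Your proof is correct, but it takes a genuinely different route from the paper's. The paper first establishes Theorem~\ref{th:zhang}, namely $\rF(\rho,\lambda\rho+(1-\lambda)\sigma)\geqslant 1-\frac12(1-\sqrt{\lambda})\norm{\rho-\sigma}_1$, by reducing to a classical statement through Proposition~\ref{prop:optimum-of-measurement} and proving that statement with the elementary inequality $\sqrt{(1+a)(1+\lambda a)}\geqslant 1+\sqrt{\lambda}\,a$; it then deduces Theorem~\ref{th:Fuchs-Zhang} at the operator level by writing $\sigma=\lambda_0^{-1}\rho+(1-\lambda_0^{-1})\widehat{\sigma}$ with $\lambda_0=e^{\rS_{\max}(\rho||\sigma)}$ and using $\rho-\widehat{\sigma}=(\rho-\sigma)/(1-\lambda_0^{-1})$. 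You bypass the intermediate theorem entirely: you transport the operator inequality $\rho\leqslant e^{\gamma}\sigma$ through a fidelity-attaining POVM (a valid step, since $\tr{M_j(e^{\gamma}\sigma-\rho)}\geqslant 0$ for $M_j\geqslant 0$) and prove a single classical lemma under the pointwise hypothesis $p_j\leqslant\lambda q_j$, via the Hellinger identity $2(1-\rF(p,q))=\sum_j(\sqrt{p_j}-\sqrt{q_j})^2$ and the ratio bound $(\sqrt{p_j}-\sqrt{q_j})/(\sqrt{p_j}+\sqrt{q_j})\leqslant(\sqrt{\lambda}-1)/(\sqrt{\lambda}+1)$ on $\set{j:p_j\geqslant q_j}$, compensated on the complement by the equal-mass identity $\sum_{p_j\geqslant q_j}(p_j-q_j)=\sum_{p_j<q_j}(q_j-p_j)=\tfrac12\norm{p-q}_1$; the arithmetic indeed yields exactly the prefactor $\sqrt{\lambda}/(1+\sqrt{\lambda})$. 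The two classical cores are essentially equivalent under the reparametrization $q=\lambda^{-1}p+(1-\lambda^{-1})\hat{q}$, but your version is more direct for the stated theorem and makes the origin of the prefactor transparent, whereas the paper's factorization produces Theorem~\ref{th:zhang} as a standalone result of independent interest. Two small points to record in a full write-up: you use only the contraction half $\norm{p^{\mathbb{M}}_\rho-p^{\mathbb{M}}_\sigma}_1\leqslant\norm{\rho-\sigma}_1$ of Proposition~\ref{prop:optimum-of-measurement}, which suffices because the prefactor is nonnegative; and the degenerate case $\rS_{\max}(\rho||\sigma)=+\infty$, where no finite $\lambda$ exists, should be dispatched separately since there the claimed bound reduces to the Fuchs--van de Graaf inequality.
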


In order to prove Theorem~\ref{th:Fuchs-Zhang} we need a more
technical result, Theorem~\ref{th:zhang}, which may be of
independent interest.
\begin{thrm}\label{th:zhang}
Let $\rho$ and $\sigma$ be two density operators in
$\density{\cH_d}$, and $\lambda\in [0,1]$. Then
\begin{eqnarray}
\rF(\rho, \lambda\rho + (1-\lambda)\sigma) \geqslant
1-\frac12(1-\sqrt{\lambda})\norm{\rho-\sigma}_1.
\end{eqnarray}
\end{thrm}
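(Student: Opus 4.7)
My plan is to reduce the claim to a classical inequality and then invoke data-processing for the trace distance. Write $\tau = \lambda\rho + (1-\lambda)\sigma$. By the Fuchs--Caves theorem, the quantum fidelity equals $\min_{\mathbb{M}\in\cM}\rF(p^{\mathbb{M}}_{\rho},p^{\mathbb{M}}_{\tau})$, and this minimum is attained by some POVM $\mathbb{M}^{\star}=\{M_j\}$ on the pair $(\rho,\tau)$. Setting $p_j=\Tr{M_j\rho}$, $q_j=\Tr{M_j\sigma}$, $r_j=\Tr{M_j\tau}$, linearity of the trace yields $r_j=\lambda p_j+(1-\lambda)q_j$, so
\begin{equation*}
\rF(\rho,\tau) \;=\; \rF(p,r) \;=\; \sum_j \sqrt{p_j\bigl(\lambda p_j+(1-\lambda)q_j\bigr)}.
\end{equation*}

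Next I would prove the classical analogue: for probability distributions $p,q$ and $r=\lambda p+(1-\lambda)q$,
\begin{equation*}
\rF(p,r) \;\geq\; 1-\tfrac12(1-\sqrt{\lambda})\Norm{p-q}_1.
\end{equation*}
The key ingredient is the concavity of $f(t)=\sqrt{\lambda+(1-\lambda)t}$ on $[0,\infty)$. Concavity forces the graph of $f$ to lie above the chord through $(0,\sqrt{\lambda})$ and $(1,1)$ on $[0,1]$, giving
\begin{equation*}
f(t)\;\geq\;\sqrt{\lambda}+(1-\sqrt{\lambda})t\quad\text{for } t\in[0,1],
\end{equation*}
while the trivial monotonicity bound $f(t)\geq 1$ holds on $[1,\infty)$. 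Applied termwise with $t=q_j/p_j$ (the indices with $p_j=0$ are harmless since both sides vanish), the two bounds combine into the uniform estimate $\sqrt{p_jr_j}\geq p_j-(1-\sqrt{\lambda})(p_j-q_j)_{+}$, where $(x)_{+}=\max\{x,0\}$. Summing over $j$ and using the identity $\sum_j(p_j-q_j)_{+}=\tfrac12\Norm{p-q}_1$ then yields the classical inequality.

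Finally, the $\ell_1$-distance contracts under any POVM, so $\Norm{p-q}_1\leq\Norm{\rho-\sigma}_1$. Chaining the three steps proves the theorem. I expect the main subtle point to be the choice of the pointwise lower bound on $f$: a weaker choice (for instance, the tangent line to $f$ at $t=1$) would produce a coefficient of the form $\tfrac12(1-\lambda)$ rather than the sharper $\tfrac12(1-\sqrt{\lambda})$, so the chord bound is precisely what makes the square root in the target inequality appear, and is what must be recognised to recover the stated constant.
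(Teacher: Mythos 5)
Your proof is correct and follows essentially the same route as the paper: reduce to the classical statement via the measured-fidelity characterization of $\rF$ together with contractivity of the trace distance under POVMs, then establish the classical inequality by splitting indices according to the sign of $p_j-q_j$ and applying the identical pointwise bound $\sqrt{p_j(\lambda p_j+(1-\lambda)q_j)}\geqslant q_j+\sqrt{\lambda}\,(p_j-q_j)$ on the set where $p_j>q_j$. The only cosmetic difference is that you derive this bound from concavity of $t\mapsto\sqrt{\lambda+(1-\lambda)t}$ via the chord through $t=0$ and $t=1$, whereas the paper verifies the equivalent inequality $\sqrt{(1+a)(1+\lambda a)}\geqslant 1+\sqrt{\lambda}\,a$ directly by expanding the square.
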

The proof of Theorem~\ref{th:zhang} is based on the following
results:
\begin{prop}[\cite{Nielsen} equations (9.23) and (9.74)]\label{prop:optimum-of-measurement}
For given two states $\rho,\sigma\in \density{\cH_d}$, we have:
$$
\norm{\rho-\sigma}_1 =
\max_{\mathbb{M}\in\cM}\norm{p^\mathbb{M}_\rho -
p^\mathbb{M}_\sigma}_1 ~~~\text{and}~~~\rF(\rho,\sigma) =
\min_{\mathbb{M}\in\cM}\rF\Pa{p^\mathbb{M}_\rho,p^\mathbb{M}_\sigma}.
$$
\end{prop}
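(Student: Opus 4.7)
The plan is to establish the two identities separately; each is a classical operational variational formula (Helstrom's theorem for the trace distance and the Fuchs--Caves theorem for the fidelity). In both cases I would first prove the easy inequality for every POVM, then exhibit a particular measurement that saturates it.

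For the trace-distance identity, I would start with the Jordan decomposition of $\rho-\sigma$: write $\rho-\sigma=P-Q$ with $P,Q\in\Pos{\cH_d}$ satisfying $PQ=0$, so that $\norm{\rho-\sigma}_1=\Tr{P}+\Tr{Q}$. For an arbitrary POVM $\mathbb{M}=\{M_j\}$, the triangle inequality together with $0\leqslant M_j\leqslant\I$ yields
\begin{align*}
\norm{p^{\mathbb{M}}_\rho-p^{\mathbb{M}}_\sigma}_1
=\sum_j\abs{\Tr{M_j(P-Q)}}
\leqslant\sum_j\Pa{\Tr{M_jP}+\Tr{M_jQ}}=\Tr{P}+\Tr{Q},
\end{align*}
and equality is achieved by the two-outcome projective measurement $\{\Pi,\I-\Pi\}$, where $\Pi$ projects onto the support of $P$; this works because $\Pi P=P$ and $\Pi Q=0$, so each term in the sum is already nonnegative before taking absolute values.

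For the fidelity identity the nontrivial direction is $\rF(p^{\mathbb{M}}_\rho,p^{\mathbb{M}}_\sigma)\geqslant\rF(\rho,\sigma)$ for every POVM. I would fix a unitary $U$ from the polar decomposition $\sqrt{\sigma}\sqrt{\rho}=U\abs{\sqrt{\sigma}\sqrt{\rho}}$, so that $\rF(\rho,\sigma)=\Tr{U^{\dagger}\sqrt{\sigma}\sqrt{\rho}}$. For each $j$, the Cauchy--Schwarz inequality in the Hilbert--Schmidt inner product, applied with $A_j=\sqrt{M_j}\sqrt{\sigma}U$ and $B_j=\sqrt{M_j}\sqrt{\rho}$, gives
\begin{align*}
\sqrt{\Tr{M_j\rho}\cdot\Tr{M_j\sigma}}\geqslant\abs{\Tr{U^{\dagger}\sqrt{\sigma}M_j\sqrt{\rho}}},
\end{align*}
and summing over $j$, together with the triangle inequality and $\sum_j M_j=\I$, produces the required lower bound $\rF(\rho,\sigma)$. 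To see that some POVM attains equality I would assume first that $\rho$ is invertible; then the projective measurement in the eigenbasis of the Hermitian operator $\rho^{-1/2}\sqrt{\sqrt{\rho}\sigma\sqrt{\rho}}\rho^{-1/2}$ makes both the Cauchy--Schwarz and the triangle inequality saturate simultaneously, and the non-invertible case follows by a continuity argument. The main obstacle is precisely this equality step: one has to produce a specific optimal measurement and verify by direct computation that every inequality in the chain becomes an equality at once, since unlike the trace-distance case the optimal POVM is not a simple two-outcome projection and its optimality relies on a nontrivial alignment of phases.
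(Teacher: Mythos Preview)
Your argument is correct and is precisely the standard proof found in the reference the paper cites: the Helstrom two-projector measurement for the trace distance, and the Cauchy--Schwarz plus Fuchs--Caves optimal measurement for the fidelity. Note, however, that the paper does \emph{not} supply its own proof of this proposition at all; it is stated as a quotation from Nielsen and Chuang (equations~(9.23) and~(9.74)) and used as a black box, so there is no in-paper argument to compare against.
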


\begin{lem}\label{lem-1}
It holds that $\sqrt{(1+a)(1+\lambda a)}\geqslant 1+\sqrt{\lambda}a$
for $a,\lambda\geqslant0$.
\end{lem}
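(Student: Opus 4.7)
The plan is to prove the inequality by squaring both sides, which is legitimate since both sides are nonnegative for $a,\lambda \geqslant 0$. After squaring, the inequality
\[
(1+a)(1+\lambda a) \geqslant (1+\sqrt{\lambda}\,a)^2
\]
should simplify substantially.

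Expanding both sides gives $1 + a + \lambda a + \lambda a^2$ on the left and $1 + 2\sqrt{\lambda}\,a + \lambda a^2$ on the right. The constant terms $1$ and the quadratic terms $\lambda a^2$ cancel, leaving the requirement $a + \lambda a \geqslant 2\sqrt{\lambda}\,a$. For $a = 0$ this is trivial; for $a > 0$ it reduces to $1 + \lambda \geqslant 2\sqrt{\lambda}$, which is exactly the AM-GM inequality applied to $1$ and $\lambda$, equivalent to $(1 - \sqrt{\lambda})^2 \geqslant 0$.

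There is really no obstacle here: the entire lemma reduces to a single application of AM-GM after squaring. The only minor subtlety is noting that $1 + \sqrt{\lambda}\,a \geqslant 0$ so that squaring preserves the inequality direction, but since $a,\lambda \geqslant 0$ this is automatic.
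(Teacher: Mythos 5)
Your proof is correct and follows essentially the same route as the paper: both reduce the claim to the identity $(1+a)(1+\lambda a)-(1+\sqrt{\lambda}\,a)^2=(1-\sqrt{\lambda})^2 a\geqslant 0$. Your explicit remark that $1+\sqrt{\lambda}\,a\geqslant 0$ (so that taking square roots preserves the inequality) is a small point the paper leaves implicit, but there is no substantive difference.
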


\begin{proof}
Now since $(1+a)(1+\lambda a) - (1+\sqrt{\lambda}a)^2 =
(1-\sqrt{\lambda})^2a\geqslant0$, the desired inequality follows
immediately.
\end{proof}

\begin{lem}\label{lem:zhang-bu}
Let $p=\set{p_j}^n_{j=1}$ and $q=\set{q_j}^n_{j=1}$ be two probability distribution, and $\lambda\in [0,1]$. Then
\begin{eqnarray}\label{eq:commuting-case}
\rF(p, \lambda p + (1-\lambda)q) \geqslant
1-\frac12(1-\sqrt{\lambda})\norm{p-q}_1.
\end{eqnarray}
\end{lem}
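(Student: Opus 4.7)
The plan is to establish the inequality termwise, with Lemma~\ref{lem-1} as the single nontrivial algebraic step. Writing $r_j := \lambda p_j + (1-\lambda) q_j$ and using $\sum_j p_j = 1$, the claim reduces to
\[
\sum_{j=1}^n \Bigl(\sqrt{p_j r_j} - p_j\Bigr) \;\geq\; -\tfrac{1}{2}(1-\sqrt{\lambda})\,\norm{p-q}_1,
\]
so I would work index-by-index and split the sum according to the sign of $p_j - q_j$.

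On the \emph{easy} set $A := \set{j : q_j \geq p_j}$, the scalar $r_j$ is a convex combination of $p_j$ and something no smaller than $p_j$, hence $r_j \geq p_j$ and therefore $\sqrt{p_j r_j} - p_j \geq 0$; these terms are discarded. On the \emph{hard} set $B := \set{j : p_j > q_j}$ is where Lemma~\ref{lem-1} enters. Assuming first that $q_j > 0$, the natural change of variables $a_j := (p_j - q_j)/q_j \geq 0$ produces
\[
p_j = q_j(1 + a_j) \qquad\text{and}\qquad r_j = q_j(1 + \lambda a_j),
\]
so Lemma~\ref{lem-1} applied with this $a_j$ yields
\[
\sqrt{p_j r_j} \;=\; q_j\sqrt{(1+a_j)(1+\lambda a_j)} \;\geq\; q_j(1 + \sqrt{\lambda}\,a_j) \;=\; \sqrt{\lambda}\,p_j + (1-\sqrt{\lambda})\,q_j,
\]
which rearranges to $\sqrt{p_j r_j} - p_j \geq -(1-\sqrt{\lambda})(p_j - q_j)$. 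The degenerate case $q_j = 0$ in $B$ reduces to the identity $\sqrt{p_j r_j} - p_j = (\sqrt{\lambda}-1) p_j = -(1-\sqrt{\lambda})(p_j - q_j)$, so the same bound persists.

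Summing over $j$, the $A$-contribution is nonnegative and the $B$-contribution is bounded below by $-(1-\sqrt{\lambda})\sum_{j \in B}(p_j - q_j) = -\tfrac{1}{2}(1-\sqrt{\lambda})\norm{p-q}_1$, using the standard identity $\sum_{j:p_j > q_j}(p_j - q_j) = \tfrac{1}{2}\norm{p-q}_1$. This yields the advertised lower bound. The only step requiring foresight is the parametrization $a_j := (p_j-q_j)/q_j$ on $B$, which is precisely what turns Lemma~\ref{lem-1} into an estimate on $\sqrt{p_j r_j}$; once that substitution is in place no further obstacle is anticipated.
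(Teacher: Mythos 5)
Your proof is correct and follows essentially the same route as the paper's: the same split into $\set{j: p_j > q_j}$ and its complement, the same substitution $a_j = (p_j-q_j)/q_j$ into Lemma~\ref{lem-1} on the hard set (your bound $\sqrt{p_j r_j} \geq \sqrt{\lambda}\,p_j + (1-\sqrt{\lambda})\,q_j$ is literally the paper's inequality (i) rewritten), and the same handling of the $q_j=0$ degeneracy. The only cosmetic difference is that you sum $\sqrt{p_j r_j} - p_j$ and invoke $\sum_{j:p_j>q_j}(p_j-q_j) = \tfrac12\norm{p-q}_1$, while the paper verifies that the sum of all its termwise lower bounds plus $\tfrac{1-\sqrt{\lambda}}{2}\norm{p-q}_1$ equals $1$.
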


\begin{proof}
In fact, \eqref{eq:commuting-case} is equivalent to the following inequality:
\begin{eqnarray}\label{eq:prob-distribution}
\sum^n_{j=1}\sqrt{p_j(\lambda p_j + (1-\lambda)q_j)} + \frac{1-\sqrt{\lambda}}2\sum^n_{j=1}\abs{p_j-q_j}\geqslant1.
\end{eqnarray}
It suffices to show that \eqref{eq:prob-distribution} is true. Now
we introduce two sets as follows:
\begin{eqnarray}
\cI_>:=\set{j:p_j>q_j},~~~\cI_\leqslant:=\set{j:p_j\leqslant q_j}.
\end{eqnarray}
Thus we obtain
\begin{enumerate}[(i)]
\item $\sqrt{p_j(\lambda p_j + (1-\lambda)q_j)} \geqslant q_j + \sqrt{\lambda}(p_j-q_j)$ for all $j\in \cI_>$.
\item $\sqrt{p_j(\lambda p_j + (1-\lambda)q_j)} \geqslant p_j$ for all $j\in \cI_\leqslant$.
\end{enumerate}
Indeed, if $j\in \cI_>$, then $p_j>q_j$. Hence (i) apparently holds
when $q_j=0$. Without loss of generality, assume that $q_j\neq0$.
Setting $a = (p_j - q_j)/q_j$ in Lemma~\ref{lem-1} and multiplying
both sides with $q_j$ yields (i). The correctness of (ii) can be
easily seen.

One can see from the above facts that
\begin{eqnarray}
\sum^n_{j=1}\sqrt{p_j(\lambda p_j + (1-\lambda)q_j)}&\geqslant&\sum_{j\in \cI_>}\Br{q_j + \sqrt{\lambda}(p_j-q_j)} + \sum_{j\in \cI_\leqslant}p_j,\\
\frac{1-\sqrt{\lambda}}2\sum^n_{j=1}\abs{p_j-q_j}&=& \frac{1-\sqrt{\lambda}}2\sum_{j\in\cI_>}(p_j-q_j) + \frac{1-\sqrt{\lambda}}2\sum_{j\in\cI_\leqslant}(q_j-p_j)\\
&=&(1-\sqrt{\lambda})\sum_{j\in\cI_>}(p_j-q_j).
\end{eqnarray}
By easy computation, we can check the correctness of the following equality:
$$
\sum_{j\in \cI_>}\Br{q_j + \sqrt{\lambda}(p_j-q_j)} + \sum_{j\in
\cI_\leqslant}p_j + (1-\sqrt{\lambda})\sum_{j\in\cI_>}(p_j-q_j) = 1.
$$
This completes the proof.
\end{proof}

\begin{proof}[The proof of Theorem~\ref{th:zhang}]
In fact, for given two states $\rho$ and $\sigma$, each POVM
$\mathbb{M}=\set{M_j}^N_{j=1}$ in $\cM$ induces two probability
distributions $p^\mathbb{M}_\rho,p^\mathbb{M}_\sigma$. From
Lemma~\ref{lem:zhang-bu}, we know that
\begin{eqnarray}
\rF\Pa{p^\mathbb{M}_\rho, \lambda p^\mathbb{M}_\rho +
(1-\lambda)p^\mathbb{M}_\sigma} \geqslant
1-\frac12(1-\sqrt{\lambda})\norm{p^\mathbb{M}_\rho-p^\mathbb{M}_\sigma}_1.
\end{eqnarray}
Taking minimum over both sides of the above last inequality relative
to $\cM$, we obtain
\begin{eqnarray}
\min_{\mathbb{M}\in\cM}\rF\Pa{p^\mathbb{M}_\rho, \lambda
p^\mathbb{M}_\rho + (1-\lambda)p^\mathbb{M}_\sigma} &\geqslant&
\min_{\mathbb{M}\in\cM}\Pa{1-\frac12(1-\sqrt{\lambda})\norm{p^\mathbb{M}_\rho-p^\mathbb{M}_\sigma}_1}\\
&=&1-\frac12(1-\sqrt{\lambda})\max_{\mathbb{M}\in\cM}\norm{p^\mathbb{M}_\rho-p^\mathbb{M}_\sigma}_1,
\end{eqnarray}
implying the desired inequality by
Proposition~\ref{prop:optimum-of-measurement}.
\end{proof}

\section{The proof of main result}

The inequality of Theorem~\ref{th:zhang} is about the special pair
of states $\rho$ and $\lambda\rho+(1-\lambda)\sigma$ and seems to be
of rather restricted importance. However it is possible to
reformulate the inequality as an inequality about any pair of
states, which we will now show. Given two density operators $\rho$
and $\sigma$, we know that if the support of $\rho$ is contained in
the support of $\sigma$ i.e. $\supp(\rho)\subseteq \supp(\sigma)$,
then
$$
\min\set{\lambda>0: \rho\leqslant \lambda\sigma} =
\lambda_{\max}(\sigma^{-1/2}\rho\sigma^{-1/2}) := \lambda_0<\infty,
$$
we also know that $\min\set{\lambda>0: \rho\leqslant \lambda\sigma}
= +\infty$ if $\supp(\rho)\nsubseteq\supp(\sigma)$. Clearly
$\lambda_0>0$. If denote $\widehat\sigma:= \frac{\sigma -
\lambda^{-1}_0\rho}{1-\lambda^{-1}_0}$, then
$$
\sigma =  \lambda^{-1}_0\rho + (1- \lambda^{-1}_0)\widehat\sigma.
$$
Therefore
\begin{eqnarray*}
\rF(\rho,\sigma) = \rF(\rho,\lambda^{-1}_0\rho + (1-
\lambda^{-1}_0)\widehat\sigma) \geqslant
1-\frac12\Pa{1-\sqrt{\lambda^{-1}_0}}\norm{\rho - \widehat\sigma}_1
\end{eqnarray*}
implies that
$$
\rF(\rho,\sigma)\geqslant
1-\frac12\frac{\sqrt{\lambda_0}}{\sqrt{\lambda_0}+1}\norm{\rho-\sigma}_1~~\text{for}~~\lambda_0>0.
$$
The lower bound in the above inequality is indeed tighter than one
in Fuchs-van de Graaf's inequality. Thus, we get a state-dependent
factor in the lower bound for fidelity, that is, when $\lambda_0=
+\infty$, the above lower bound is reduced to the lower bound in
Fuchs-van de Graaf's inequality.

For the related problems along this line such as min- and max-
(relative) entropy, we refer to \cite{Datta}. By combing the
concavity of fidelity and Fuchs-van de Graaf's inequality, we have
$$
\rF(\rho,\lambda\rho+(1-\lambda)\sigma)\geqslant
1-\frac12(1-\lambda)\norm{\rho-\sigma}_1.
$$
Comparing this lower bound with ours indicates that our lower bound
is indeed tighter if $\lambda\in(0,1)$. That is
$$
\rF(\rho,\lambda\rho+(1-\lambda)\sigma)\geqslant
1-\frac12(1-\sqrt{\lambda})\norm{\rho-\sigma}_1\geqslant
1-\frac12(1-\lambda)\norm{\rho-\sigma}_1.
$$

\section{Discussion and conclusion}

In fact, we can also make analysis of the saturation of the first
inequality in \eqref{eq:Graaf} via our main result (i.e.
Theorem~\ref{th:Fuchs-Zhang}). Generally, we have
\begin{eqnarray}\label{eq:midle-bound}
\rF(\rho,\sigma)\geqslant
1-\frac12\frac{e^{\frac12\rS_{\max}(\rho||\sigma)}}{1+e^{\frac12\rS_{\max}(\rho||\sigma)}}\norm{\rho-\sigma}_1
\geqslant1-\frac12\norm{\rho-\sigma}_1.
\end{eqnarray}
Now we have equality if the lower bound on the fidelity between two
states $\rho$ and $\sigma$ is saturated in Fuchs-van de Graaf's
inequality. That is,
$\rF(\rho,\sigma)=1-\frac12\norm{\rho-\sigma}_1$, which means that
two inequality are saturated in \eqref{eq:midle-bound}. Thus we get
$$
\frac{e^{\frac12\rS_{\max}(\rho||\sigma)}}{1+e^{\frac12\rS_{\max}(\rho||\sigma)}}\norm{\rho-\sigma}_1
=\norm{\rho-\sigma}_1.
$$
This amounts to say that
$$
\frac{e^{\frac12\rS_{\max}(\rho||\sigma)}}{1+e^{\frac12\rS_{\max}(\rho||\sigma)}}=1~~\text{or}~~\norm{\rho-\sigma}_1=0,
$$
which is equivalent to $\rS_{\max}(\rho||\sigma)=+\infty$ or
$\rho=\sigma$. The condition $\rS_{\max}(\rho||\sigma) = +\infty$
actually says that the support of $\rho$ is not contained in the
support of $\sigma$. Although equality
$\rF(\rho,\sigma)=1-\frac12\norm{\rho-\sigma}_1$ implies
$\rS_{\max}(\rho||\sigma)=+\infty$ or $\rho=\sigma$, the converse is
not true \cite{Audenaert1}.

In this paper, we obtained a lower bound on the fidelity between a
fixed state and its a mixed path with another state (a similar topic
can be found in \cite{Audenaert2,Audenaert3}). Based on this result,
we derived a lower bound on the fidelity between two states, which
improved Fuchs-van de Graaf's inequality. Our main result answer
positively the conjecture proposed in \cite{Lin}. The potential
applications in quantum information theory are left in the future
research.

\subsubsection*{Acknowledgements}
The work was supported by National Natural Science Foundation of
China (11301124, 11171301) and by the Doctoral Programs Foundation
of the Ministry of Education of China (J20130061). The authors are
grateful to Koenraad M.R. Audenaert for sending us his note and
telling us his different approach towards the proof of
Theorem~\ref{th:zhang}.



\end{document}